\newcommand{\MAW}{\text{MAW}}
\newcommand{\ST}{\textit{ST}}
\newcommand{\nodestring}{\textsf{str}}
\newcommand{\RT}{\textsf{T}}
\newcommand{\UT}{\textsf{T}}
\begin{document}

\title{Minimal Absent Words in Rooted and Unrooted Trees}

\author{
Gabriele Fici\inst{1}   \and Pawe{\l}  Gawrychowski\inst{2}
}

\institute{
Dipartimento di Matematica e Informatica, Universit\`a di Palermo, Italy\\ \email{gabriele.fici@unipa.it} 
\and
Institute of Computer Science, University of Wroc{\l}aw, Poland\\ 
\email{gawry@cs.uni.wroc.pl}
}

\maketitle

\begin{abstract}
We extend the theory of minimal absent words to (rooted and unrooted) trees, having edges labeled by letters from an alphabet $\Sigma$ of cardinality $\sigma$. We show that the set $\MAW(T)$ of minimal absent words of a rooted (resp.~unrooted) tree $T$ with $n$ nodes has cardinality $O(n\sigma)$ (resp.~$O(n^{2}\sigma)$), and we show that these bounds are realized. Then, we exhibit algorithms to compute all minimal absent words in a rooted (resp.~unrooted) tree in output-sensitive time $O(n+|\MAW(T)|)$ (resp.~$O(n^{2}+|\MAW(T)|)$ assuming an integer alphabet of size polynomial in $n$.
\end{abstract}

\section{Introduction}

Minimal absent words (a.k.a.~minimal forbidden words or minimal forbidden factors) are a useful combinatorial tool for investigating words (strings). A word $u$ is absent from a word $w$ if $u$ does not occur (as a factor) in $w$, and it is minimal if all its proper factors occur in $w$. This definition naturally extends to languages of words closed under taking factors.

The theory of minimal absent words has been developed in a series of papers~\cite{DBLP:conf/stacs/BealMR96,DBLP:journals/ipl/CrochemoreMR98,DBLP:conf/birthday/MignosiRS99,DBLP:journals/tcs/MignosiRS02,DBLP:journals/fuin/BealCM03} (the reader is pointed to~\cite{fici} for a survey on these results).
Minimal absent words  have then found applications in several areas, e.g., data compression~\cite{DCA,DBLP:conf/sccc/CrochemoreN02,DBLP:conf/dcc/FialaH08,DBLP:conf/isita/OtaM10}, on-line pattern matching~\cite{DBLP:conf/fct/CrochemoreHKMPR17}, sequence comparison~\cite{Charalampopoulos2018,SPIRE2018}, sequence assembly~\cite{FICI2006214,DBLP:journals/ita/MignosiRS01},  bioinformatics~\cite{Chairungsee2012109,DBLP:journals/bioinformatics/SilvaPCPF15,FiLaLoRi18}, musical data extraction~\cite{DBLP:conf/ismir/CrawfordB018}.

Bounds on the number of minimal absent words have been extensively investigated. The upper bound on the number of minimal absent words of a word of length $n$ over an alphabet of size $\sigma$ is $O(n\sigma)$~\cite{DBLP:journals/ipl/CrochemoreMR98,DBLP:journals/tcs/MignosiRS02}, and this is tight for integer alphabets~\cite{Charalampopoulos2018}; in fact, for large alphabets, such as when $\sigma\geq\sqrt{n}$, this bound is also tight even for minimal absent words having the same length~\cite{DBLP:journals/almob/AlmirantisCGIMP17}. 

Several algorithms are known to compute the set of minimal absent words of a word.  State-of-the-art algorithms compute all minimal absent words of a word of length $n$  over an alphabet of size $\sigma$ in time $O(n\sigma)$~\cite{DBLP:journals/ipl/CrochemoreMR98,DBLP:journals/bmcbi/BartonHMP14} or in output-sensitive $O(n + |\MAW(w)|)$ time~\cite{DBLP:conf/mfcs/FujishigeTIBT16,SPIRE2018} for integer alphabets. Space-efficient data structures based on the Burrows-Wheeler transform can also be applied for this computation~\cite{DBLP:conf/esa/BelazzouguiCKM13,DBLP:journals/algorithmica/BelazzouguiC17}. 

For a finite set of words $P$ over an alphabet of size $\sigma$, the minimal absent words of the factorial closure of $P$ can be computed in $O(|P|^{2}\sigma)$~\cite{DBLP:journals/fuin/BealCM03}, where $|P|$ is the sum of the lengths of the words of $P$.
Generalizations of minimal absent words have been considered for circular words~\cite{Charalampopoulos2018,FiReRi19} and multi-dimensional shifts~\cite{DBLP:journals/ijac/BealFM05}.

In this paper, we extend the theory of minimal absent words to trees. We consider trees with edges labeled by letters from an integer
alphabet $\Sigma$ of cardinality $\sigma$ polynomial in $n$.
In the case of a rooted tree $\RT$, every node $v$ is associated with a word $\nodestring(v)$, defined as the sequence of edge labels from $v$ to the root. A rooted tree $\RT$ can therefore be seen as a set  of words $L_{\RT}=\{\nodestring(v)\mid v \mbox{ in $\RT$}\}$, that we call the \emph{language} of $\RT$. If $\RT$ has $n$ nodes, then  $L_{\RT}$ contains at most $n$ distinct words, each of which has length at most $n$.
We call a rooted tree $\RT$ \emph{deterministic} when the edges from a node to its children are labeled by pairwise distinct letters.
Throughout the paper we will assume that all rooted trees are deterministic, which can be ensured without losing the generality thanks to the following
lemma.

\begin{lemma}
\label{lem:proper}
Given a rooted tree $\RT$ on $n$ nodes we can construct in $O(n)$ time a deterministic rooted tree $\RT'$ with the same set of corresponding words.
\end{lemma}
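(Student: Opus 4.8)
The plan is to reinterpret the language $L_{\RT}$ through the root-to-node strings and then realise the required tree as an ordinary trie. For a node $v$, let $s(v)$ denote the string of edge labels read along the path from the root down to $v$; this is exactly the reverse of $\nodestring(v)$, and the root gets the empty string. Since reversal is a bijection, the families $\{\nodestring(v):v\in\RT\}=L_{\RT}$ and $S:=\{s(v):v\in\RT\}$ carry the same information, and $|L_{\RT}|=|S|$. The crucial observation is that $S$ is prefix-closed: every prefix of a root-to-node string is the root-to-ancestor string of some ancestor. Hence the trie of $S$ has exactly $|S|\le n$ nodes, one per distinct element of $S$, and by construction the edges leaving any node carry pairwise distinct labels, i.e.\ the trie is deterministic. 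Taking $\RT'$ to be this trie, every node $u'$ reached by a root path $s\in S$ satisfies $\nodestring(u')=s^{R}$, so $L_{\RT'}=\{s^{R}:s\in S\}=L_{\RT}$. Thus $\RT'$ is the desired deterministic tree, and it remains only to build it within the time budget.

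Equivalently, $\RT'$ is obtained from $\RT$ by merging any two nodes that have the same root-to-node string. I would compute this partition by a single pass over $\RT$ in order of non-decreasing depth, maintaining for each node $v$ the identifier $c(v)$ of the trie node it maps to. The root maps to the trie root. A non-root node $v$ with parent $p$ and incoming label $a$ must map to the unique child of $c(p)$ along label $a$; two nodes merge precisely when their parents already agree ($c(p)$ equal) and their incoming labels agree. So, processing depth by depth, I group all nodes of the current depth by the key $(c(p),a)$, create one fresh trie node (with an incoming $a$-edge from $c(p)$) per distinct key, and set $c(\cdot)$ accordingly. Correctness is an immediate induction on depth: nodes sharing a key have, by the inductive hypothesis, equal root-to-parent strings and equal last letters, hence equal root-to-node strings, and conversely distinct keys force distinct strings.

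The only real obstacle is performing all the groupings in total time $O(n)$ rather than $O(n\sigma)$ or $O(n\log n)$. I would first remap the edge labels to the range $[0,n)$ by one radix sort of the at most $n-1$ labels (each polynomial in $n$). Then, at each depth, the keys $(c(p),a)$ are pairs of integers, with $c(p)$ bounded by the number of trie nodes already created and $a\in[0,n)$, so I can group them with a two-pass (LSD) radix sort. To keep each pass linear in the number $k_d$ of nodes at depth $d$ despite the large key ranges, I reuse two global bucket arrays of size $n$ and reset only the entries actually touched, which is the standard device that makes counting sort run in $O(k_d)$ when few distinct keys occur. Summing over depths gives $O(\sum_d k_d)=O(n)$, and creating the $O(n)$ trie nodes and edges is also $O(n)$, for a total of $O(n)$. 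A randomised alternative is to keep one hash table per trie node and insert the nodes one by one in depth order, yielding $O(n)$ expected time.
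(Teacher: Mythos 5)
Your proposal is correct and matches the paper's proof in essence: both merge nodes with equal root-to-node strings by processing the tree level by level, identifying each node with a node of the new tree determined by the pair (parent's image, incoming label), and both achieve $O(n)$ time by linear-time integer bucketing with reusable size-$n$ arrays (your touched-bucket reset plays the same role as the paper's timestamped array, and your label remapping mirrors the paper's initial counting sort of each level by edge label). Your explicit framing via the trie of the prefix-closed set of reversed strings, and the induction on depth, are just a more detailed write-up of the same argument.
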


\begin{proof}
The depth of a node of $\RT$ is its distance from the root. 
We start with sorting, for every $d=1,2,..$, the set of nodes $S(d)$ at depth
$d$ according to the labels of the edges leading to their parents. This can be
done in $O(n)$ total time with counting sort.
Then, we construct $\RT'$ by processing $S(0),S(1),S(2),..$. Assuming that we have
already identified, for every node $u\in S(d)$, its corresponding node $f(u)$ of $\RT'$,
we need to construct and identify the nodes $f(u')$ for every $u'\in S(d+1)$.
We process all nodes $u'\in S(d+1)$ in groups corresponding to the same letter
$a$ on the edge leading to their parent (because of the initial sorting we already have these groups available).
Denoting by $u$ the parent of $u'$ in $\RT$, we check if $f(u)$ has been already accessed
while processing the group of $a$, and if so we set $f(u')$ to be the already created node of $\RT'$.
Otherwise, we create a new edge outgoing from $f(u)$ to a new node $v$ in $\RT'$ and labeled with $a$,
and set $f(u')$ to be $v$. To check if $f(u)$ has been already accessed while processing the current
group (and retrieve the corresponding $f(u')$ if this is the case) we simply allocate an array $A$
of size $n$ indexed by nodes of $\RT'$ identified by number from $\{1,2,\ldots,n\}$. For every
entry of $A$ we additionally store a timestamp denoting the most recent group for which the
corresponding entry has been modified, and increase the timestamp after having processed the current
group.
\qed
\end{proof}

One could also define the set of words corresponding to a rooted tree $\RT$ by considering a set of words from the root to every node $v$
(in the literature this is sometimes called a \emph{forward trie}, as opposed to a \emph{backward trie}, cf.~\cite{Ine19}).
In our context, this distinction is meaningless, as the obtained languages are the same up to reversing all the words.

We say that a word $aub$, with $a,b\in\Sigma$, is a minimal absent word of a rooted tree $\RT$ if $aub$ is not a factor of any word $\nodestring(v)$ in $L_{\RT}$ but there exist words $\nodestring(v_{1})$ and $\nodestring(v_{2})$ in $L_{\RT}$ (not necessarily distinct) such that $au$ is a factor of  $\nodestring(v_{1})$ and $ub$ is a factor of $\nodestring(v_{2})$. That is, the set $\MAW(\RT)$ of minimal absent words of $\RT$ is the set of minimal absent words of the factorial closure of the language $L_{\RT}$. 
Since any word of length $n$ can be transformed into a unary rooted tree with $n+1$ nodes, some of the properties of minimal absent words for usual words can be transferred to rooted trees. Indeed, rooted trees are a strict generalization of words.

For unrooted trees, the definition of minimal absent words is analogous: We identify an unrooted tree $\UT$ with the language of words $L(\UT)$ corresponding to all (concatenations of labels of) simple paths that can be read in $\UT$ from any of its nodes. 
The language $L(\UT)$  contains $O(n^{2})$ words, each of which has length at most $n$. We therefore  define the set $\MAW(\UT)$ of
minimal absent words of $\UT$ as the set of  minimal absent words of the language $L(\UT)$, which in this case is already closed under
taking factors by definition. 

\paragraph{Our results. }
We prove that for any rooted tree with $n$ nodes there are $O(n\sigma)$ minimal absent words, and we show that this bound is tight.
For unrooted trees, we prove that the previous bound becomes $O(n^{2}\sigma)$, and we give an explicit construction that achieves this bound. We also consider the case of minimal absent words of fixed length and generalize a previously-known construction.

Furthermore, we present an algorithm that computes all the minimal absent words in a rooted tree $\RT$ with $n$ nodes in  output-sensitive time $O(n+|\MAW(\RT)|)$. This also yields an algorithm that computes all the minimal absent words in an unrooted tree $\UT$ with $n$ nodes in time $O(n^{2}+|\MAW(\UT)|)$. Note that while it is plausible that an efficient algorithm could have been designed, as in the case of words,
from a DAWG~\cite{DBLP:conf/mfcs/FujishigeTIBT16}, the size of the DAWG of a backward/forward tree is superlinear~\cite{Ine19},
so it is not immediately clear if such an approach would lead to an optimal algorithm.
Excluding the space necessary to store all the results, our algorithms need $O(n)$ and $O(n^{2})$ space, respectively.

Our algorithms are designed in the word-RAM model with $\Omega(\log n)$-bit words.

\section{Bounds on the number of minimal absent words}

Let $\RT$ be a rooted tree with $n$ nodes and edges labeled by letters from an integer alphabet $\Sigma$ of cardinality $\sigma$ polynomial
in $n$. Let the
language of $\RT$ be $L_{\RT}=\{\nodestring(v)\mid v \mbox{ in $\RT$}\}$, where $\nodestring(v)$ is the sequence of edge labels
from node $v$ to the root.

For convenience, we add a new root to $\RT$ and an edge labeled by a new letter $\$$ not belonging to $\Sigma$ from the new root to the old root. This corresponds to appending  $\$$ at the end of each word of $L_{\RT}$. We then arrange all the words of $L_{\RT}$ in a trie. Each node $u$ of this trie corresponds to a word obtained by concatenating the edges from the root of the trie to node $u$, so in this paper we will implicitly identify a node of the trie with the corresponding word in the set of prefixes of $L_{\RT}$.
Following a standard approach, if we compact this trie by collapsing maximal chains of edges with every inner node having exactly one child and edges labeled by words,  we obtain the suffix tree  $\ST$ of $\RT$. The nodes in $\ST$ (the branching nodes) are called explicit nodes, while the nodes of the trie that have been collapsed (the non-branching nodes) are called implicit. Because $\$$ does not belong to the original alphabet,
the leaves of $\ST$ are in one-to-one correspondence with the nodes of $\RT$.

A word $aub$, with $a,b\in \Sigma$, is a minimal absent word for $\RT$ if it is a minimal absent word for the factorial closure of $L_{\RT}$, that is, if both $au$ and $ub$ but not $aub$ are factors of some words in $L_{\RT}$. The set of minimal absent words of $\RT$ is denoted by $\MAW(\RT)$.

If $aub\in \MAW(\RT)$, then $au$ occurs as a factor in some word of $L_{\RT}$ but never followed by letter $b$, hence there exists a letter $b'\in \Sigma\cup\{\$\}$ such that $ub$ and $ub'$ can be read in $\ST$ spelled from the root (possibly terminating in an implicit node).
This implies that $u$ corresponds to an explicit node in $\ST$, and $b$ is the first letter on its outgoing edge. Consequently,
$ub$ can be identified with an edge of $\ST$, so the number of minimal absent words of $\RT$ is upper-bounded by the
product of $\sigma$ and the number of edges of $\ST$.

\begin{theorem}
The number of  minimal absent words of a rooted tree with $n$ nodes whose edges are labeled by letters from an alphabet of size $\sigma$ is $O(n\sigma)$.
\end{theorem}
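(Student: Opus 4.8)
The plan is to finish the reduction that the preceding discussion has already set up. The text has established that every minimal absent word $aub$ of $\RT$ determines the letter $a\in\Sigma$ together with a distinguished edge of the suffix tree $\ST$, namely the edge whose upper endpoint is the explicit node corresponding to $u$ and whose first label is $b$; moreover this correspondence $aub\mapsto(a,ub)$ is injective, since from the edge we recover $u$ and $b$, and $a$ is recorded separately. Consequently $|\MAW(\RT)|\le\sigma\cdot|E(\ST)|$, where $E(\ST)$ denotes the edge set of $\ST$. Thus the entire statement reduces to showing $|E(\ST)|=O(n)$, and this is what I would prove.

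To bound $|E(\ST)|$ I would first count the leaves of $\ST$. Because the fresh sentinel $\$$ was appended to every word of $L_{\RT}$ and $\$$ occurs in no word except as its final letter, no word of $L_{\RT}$ is a proper prefix of another; hence each word terminates at its own leaf of the underlying trie, and compaction does not merge or remove leaves. Therefore $\ST$ has exactly as many leaves as $L_{\RT}$ has distinct words, which is at most $n$ by the stated one-to-one correspondence between the leaves of $\ST$ and the nodes of $\RT$.

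Finally I would invoke the structural property that $\ST$ is a \emph{compacted} trie, so every explicit internal node is branching and has at least two children. In any rooted tree with $\ell$ leaves in which every internal node has at least two children, the number of internal nodes is fewer than $\ell$, so the total number of nodes, and hence of edges, is $O(\ell)$. With $\ell\le n$ this gives $|E(\ST)|=O(n)$, and combining with the displayed inequality yields $|\MAW(\RT)|\le\sigma\cdot|E(\ST)|=O(n\sigma)$. I do not expect a genuine obstacle here; the only point requiring care is the leaf count, where one must lean on the sentinel to guarantee that distinct words give distinct leaves and on the branching property of explicit nodes to ensure that compaction keeps the edge count linear in the number of leaves.
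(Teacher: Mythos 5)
Your proof is correct and takes essentially the same route as the paper: the injection $aub\mapsto(a,ub)$ sending each minimal absent word to a (letter, edge of $\ST$) pair, giving $|\MAW(\RT)|\le\sigma\cdot|E(\ST)|$, followed by the bound $|E(\ST)|=O(n)$. The paper leaves the edge count implicit; your sentinel-based leaf count and the branching property of the compacted trie fill in that standard detail correctly.
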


Therefore, the same upper bound that holds for words also holds for rooted trees. As a consequence, we have that all known upper bounds for words, and constructions that realize them, are still valid for rooted trees. 

In particular, one question that has been studied is whether the upper bound $O(n\sigma)$ is still tight when one considers minimal absent words of a fixed length. Almirantis et al.~\cite[Lemma 2]{DBLP:journals/almob/AlmirantisCGIMP17} showed that the upper bound $O(n\sigma)$ for a fixed length of minimal absent words is tight if $\sqrt{n} < \sigma \leq n$. Actually, they showed that it is possible to construct words of any
length $n$, with $\sigma \leq n \leq \sigma (\sigma-1)$, having $\Omega(n\sigma)$ minimal absent words of length $3$.
We now give a construction that generalizes this result.  

Let $\Sigma=\{1,2,\ldots, \sigma\}$. For every $n$, let $k>1$ be such that $\sigma^{k} \leq n < \sigma^{k+1}$.  Let $\Sigma^{k}=\{s_{1},s_{2},\ldots ,s_{\sigma^{k}}\}$. For every $1\leq i\leq \sigma^{k}$ we define the word 
$$w_{i}=\$1s_{i}\$s_{i}1\$2s_{i}\$s_{i}2\$\cdots \$\sigma s_{i} \$ s_{i}\sigma\$,$$ 
where $\$$ is a new symbol not belonging to $\Sigma$. The length of each word $w_{i}$ is $2\sigma(k+2)+1$, which is smaller than $n$ up to excluding small cases \footnote{The reader may verify that for $k=2$, $|w_{i}|\leq \sigma^{k}$ as soon as $\sigma\geq 9$; for $k>2$,  $|w_{i}|\leq \sigma^{k}$ as soon as $\sigma+k\geq 7$.}. 

Let $\ell=\lfloor{n/|w_{i}|}\rfloor$ and set $w=w_{1}w_{2}\cdots w_{\ell}$, so that $|w|>n/2$.  
We have that $as_{i}b$ is a minimal absent word of $w$ for every $a,b\in \Sigma$ and $1\leq i\leq \ell$. So, $w$ has length $\Theta(k\sigma\ell)$ and there are $\Theta(\sigma^{2}\ell)$ minimal absent words of $w$ of length $k+2$. 

Thus, we have proved the following theorem.

\begin{theorem}\label{th:bound}
A word of length $n$ over an alphabet of size $\sigma$ can have $\Omega(n\sigma/\log_{\sigma}n)$ minimal absent words all of the same length.
\end{theorem}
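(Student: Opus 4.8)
The plan is to verify the three conditions that turn each word $as_ib$ into a minimal absent word of $w$, then to count the distinct such words the construction produces, and finally to convert these counts into the claimed asymptotic bound. Throughout I would use the standard characterization: a word $aub$ with $a,b\in\Sigma$ is a minimal absent word of $w$ precisely when $aub$ is not a factor of $w$ while both $au$ and $ub$ are factors of $w$.

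First I would show that $as_ib$ is absent from $w$ for every $a,b\in\Sigma$ and $1\le i\le\ell$. The key structural observation is that every maximal run of symbols from $\Sigma$ occurring in $w$ is one of the blocks $cs_i$ or $s_ic$, each of length exactly $k+1$: within each $w_i$ the $\Sigma$-letters are always separated by the symbol $\$$, and at the junction between $w_j$ and $w_{j+1}$ two copies of $\$$ meet, so no $\Sigma$-run can cross a $w_j$-boundary. Since $as_ib$ is a word over $\Sigma$ of length $k+2>k+1$, it cannot be contained in any maximal $\Sigma$-run and is therefore not a factor of $w$. I expect this to be the main obstacle, and the point worth guarding carefully: a priori one might fear that a self-overlap of $s_i$ lets $s_i$ recur at a shifted position inside some block $cs_i$ or $s_ic$, leaving it flanked by $\Sigma$-letters on both sides; the length bound dispenses with this uniformly, since no length-$(k+2)$ factor over $\Sigma$ exists at all, so no case analysis on the periods of $s_i$ is needed.

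Next I would check that the two length-$(k+1)$ borders of $as_ib$ do occur. By construction $w_i$ contains the block $\$as_i\$$ (the $cs_i$ block for $c=a$) and the block $\$s_ib\$$ (the $s_ic$ block for $c=b$), so both $au=as_i$ and $ub=s_ib$ are factors of $w_i$, hence of $w$, for every $i\le\ell$. Combined with absence, the characterization yields $as_ib\in\MAW(w)$.

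Finally I would count and take asymptotics. The map $(a,i,b)\mapsto as_ib$ is injective, since the first and last letters recover $a$ and $b$ and the length-$k$ middle factor recovers $s_i$, hence $i$ (the words $s_i$ being pairwise distinct). This gives $\ell\sigma^2$ distinct minimal absent words, all of the common length $k+2$. Using $|w_i|=2\sigma(k+2)+1=\Theta(\sigma k)$ and $\ell=\lfloor n/|w_i|\rfloor=\Theta(n/(\sigma k))$, the count is $\Theta(\ell\sigma^2)=\Theta(n\sigma/k)$, while $|w|=\ell|w_i|=\Theta(n)$; and $\sigma^k\le n<\sigma^{k+1}$ forces $k=\Theta(\log_\sigma n)$, so the number of equal-length minimal absent words is $\Theta(n\sigma/\log_\sigma n)$. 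The only remaining bookkeeping is that $|w|$ is $\Theta(n)$ rather than exactly $n$, which is immaterial for an $\Omega$-bound in this polynomial regime.
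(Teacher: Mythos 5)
Your proposal is correct and takes essentially the same approach as the paper: it verifies the paper's own construction $w=w_1w_2\cdots w_\ell$, showing absence of $a s_i b$ via the observation that every maximal $\Sigma$-run of $w$ is a block $cs_i$ or $s_ic$ of length exactly $k+1$ between consecutive $\$$ symbols, presence of $as_i$ and $s_ib$ from the blocks of $w_i$, and the same count $\Theta(\sigma^2\ell)=\Theta(n\sigma/k)=\Omega(n\sigma/\log_\sigma n)$ with $k=\Theta(\log_\sigma n)$. The paper states these verifications only tersely, and the details you supply (the run-length argument that makes any case analysis on self-overlaps of $s_i$ unnecessary, the injectivity of $(a,i,b)\mapsto as_ib$, and the bookkeeping $|w|=\Theta(n)$) are all sound.
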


Observe that for $\sqrt{n} < \sigma \leq n$, $\log_{\sigma}n=\Theta(1)$, therefore~\cref{th:bound} strictly generalizes Almirantis et al.'s result.

\medskip 
Let now $\UT$ be an unrooted tree. The number of distinct simple paths in $\UT$ is $O(n^{2})$. Since each minimal absent word $aub$ is uniquely described by a pair $(au,b)$ such that $au$ is a simple path in $\UT$ and $b$ is a letter, we have that the  number of minimal absent words of $\UT$ is upper-bounded by $O(n^{2}\sigma)$.

\begin{theorem}
The number of  minimal absent words of an unrooted tree with $n$ nodes whose edges are labeled by letters from an alphabet of size $\sigma$ is $O(n^{2}\sigma)$.
\end{theorem}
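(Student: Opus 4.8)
The plan is to give a counting argument that charges each minimal absent word to an easily countable combinatorial object, mirroring the edge-counting argument used for rooted trees but replacing ``edges of the suffix tree'' by ``simple paths of $\UT$''. Concretely, I would fix a minimal absent word $aub\in\MAW(\UT)$ with $a,b\in\Sigma$ and associate to it the pair $(au,b)$. By the definition of a minimal absent word, the prefix $au$ occurs in $L(\UT)$, i.e.\ $au$ is the word spelled by some simple path of $\UT$; thus every minimal absent word contributes a pair whose first coordinate is a \emph{simple-path word} and whose second coordinate is a letter of $\Sigma$. The whole argument then reduces to (i) showing that this assignment is injective, and (ii) bounding the number of admissible pairs.

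For injectivity, I would observe that from the pair $(au,b)$ one recovers $aub$ by appending $b$, so $aub\mapsto(au,b)$ is a well-defined injection from $\MAW(\UT)$ into the set $\{\,x : x\text{ is a simple-path word of }\UT\,\}\times\Sigma$. Hence $|\MAW(\UT)|$ is at most the number of distinct simple-path words times $\sigma$. Here I would first recall that $L(\UT)$ is closed under taking factors, since any sub-path of a simple path is again a simple path; this is precisely why no additional factorial closure is needed and why ``$au$ is a factor of some word of $L(\UT)$'' is equivalent to ``$au$ is itself a simple-path word.''

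The key quantitative step is counting simple-path words. Rather than count distinct words directly, I would bound the number of simple paths themselves, which can only be larger (two different paths may spell the same word, and that only decreases the count). In a tree, any ordered pair of distinct nodes $(p,q)$ is joined by a unique simple path, read in the direction from $p$ to $q$; there are $n(n-1)$ such ordered pairs, giving $O(n^2)$ directed simple paths, and every simple-path word of positive length (in particular every $au$, which begins with the letter $a$) arises from one of them. Combining this with the injection yields $|\MAW(\UT)|\le O(n^2)\cdot\sigma=O(n^2\sigma)$.

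I do not expect a genuine obstacle here: the statement is an upper bound and every inequality goes the ``safe'' direction. The only points requiring a moment's care are that distinct simple paths may spell identical words, so that one must count paths (an overestimate) rather than words, and that the trailing letter $b$ ranges over $\Sigma$ rather than $\Sigma\cup\{\$\}$, since an unrooted tree carries no sentinel. Everything else follows directly from the definitions of $L(\UT)$ and $\MAW(\UT)$.
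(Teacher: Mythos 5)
Your proposal is correct and follows essentially the same route as the paper: the paper likewise charges each minimal absent word $aub$ to the pair $(au,b)$, where $au$ is a simple path of $\UT$ and $b\in\Sigma$, and bounds the number of simple paths by $O(n^{2})$. You merely spell out the details (injectivity, factor-closedness of $L(\UT)$, counting ordered node pairs) that the paper leaves implicit, all of which are accurate.
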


We now provide an example of an unrooted tree realizing this bound. Let $\Sigma=\{0,1,\ldots,s\}.$ Our unrooted tree $\UT$ is built as follows:
\begin{itemize}
\item  We first build a sequence of $N+1$ nodes such that every other node is connected to $s$ terminal nodes with edges labeled by $1,2,\ldots,s$ and is connected to the next node of the sequence with an edge labeled by $0$;
\item Then, we attach to each of the last nodes of the previous sequence  $s$ simple paths composed of $N$ nodes with edges labeled by $0$.
\end{itemize}
See Figure~\ref{fig:ut2} for an illustration.

\begin{figure}
  \begin{tikzpicture}  [scale=.75,auto=left]
  
    \node[draw,circle,fill,inner sep=1pt] (A-1) at (0,0) {};
    \node[draw,circle,fill,inner sep=1pt] (A-2) at (1.7,0) {};
    \node[draw,circle,fill,inner sep=1pt] (A-3) at (3.4,0) {};
    \node[draw,circle,fill,inner sep=1pt] (A-4) at (5.,0) {};
    \node[draw,circle,fill,inner sep=1pt] (A-5) at (6.7,0) {};
    \path[draw] (A-1) -- (A-2) node[pos=0.5,anchor=south] {$0$};
    \path[draw] (A-2) -- (A-3) node[pos=0.5,anchor=south] {$0$};
    \path[draw,dotted] (A-3) -- (A-4);
    \path[draw] (A-4) -- (A-5) node[pos=0.5,anchor=south] {$0$};
    \path[draw] (A-1) -- ++(-120:1.5) node[pos=0.5,anchor=east] {1} node[draw,circle,fill,inner sep=1pt] {};
    \path[draw] (A-1) -- ++(-110:1.5) node[pos=0.5,anchor=north] {2} node[draw,circle,fill,inner sep=1pt] {};
    \path[draw,dotted] (A-1) -- ++(-90:1.5);
    \path[draw] (A-1) -- ++(-70:1.5) node[pos=0.5,anchor=north] {$s$} node[draw,circle,fill,inner sep=1pt] {};
     \path[draw] (A-2) -- ++(-120:1.5) node[pos=0.5,anchor=east] {1} node[draw,circle,fill,inner sep=1pt] {};
    \path[draw] (A-2) -- ++(-110:1.5) node[pos=0.5,anchor=north] {2} node[draw,circle,fill,inner sep=1pt] {};
    \path[draw,dotted] (A-2) -- ++(-90:1.5);
    \path[draw] (A-2) -- ++(-70:1.5) node[pos=0.5,anchor=north] {$s$} node[draw,circle,fill,inner sep=1pt] {};
            \path[draw] (A-3) -- ++(-120:1.5) node[pos=0.5,anchor=east] {1} node[draw,circle,fill,inner sep=1pt] {};
    \path[draw] (A-3) -- ++(-110:1.5) node[pos=0.5,anchor=north] {2} node[draw,circle,fill,inner sep=1pt] {};
    \path[draw,dotted] (A-3) -- ++(-90:1.5);
    \path[draw] (A-3) -- ++(-70:1.5) node[pos=0.5,anchor=north] {$s$} node[draw,circle,fill,inner sep=1pt] {};
     \path[draw] (A-4) -- ++(-120:1.5) node[pos=0.5,anchor=east] {1} node[draw,circle,fill,inner sep=1pt] {};
    \path[draw] (A-4) -- ++(-110:1.5) node[pos=0.5,anchor=north] {2} node[draw,circle,fill,inner sep=1pt] {};
    \path[draw,dotted] (A-4) -- ++(-90:1.5);
    \path[draw] (A-4) -- ++(-70:1.5) node[pos=0.5,anchor=north] {$s$} node[draw,circle,fill,inner sep=1pt] {};
     \node[draw,circle,fill,inner sep=1pt] (A-6) at (8.2,1.5) {};
    \node[draw,circle,fill,inner sep=1pt] (A-7) at (9.9,1.5) {};
    \node[draw,circle,fill,inner sep=1pt] (A-8) at (11.6,1.5) {};
    \node[draw,circle,fill,inner sep=1pt] (A-9) at (13.2,1.5) {};
    \node[draw,circle,fill,inner sep=1pt] (A-10) at (14.9,1.5) {};
    \path[draw] (A-6) -- (A-7) node[pos=0.5,anchor=south] {0};
    \path[draw] (A-7) -- (A-8) node[pos=0.5,anchor=south] {0};
    \path[draw,dotted] (A-8) -- (A-9);
    \path[draw] (A-9) -- (A-10) node[pos=0.5,anchor=south] {$0$};
    \node[draw,circle,fill,inner sep=1pt] (A-11) at (8.2,.3) {};
    \node[draw,circle,fill,inner sep=1pt] (A-12) at (9.9,.3) {};
    \node[draw,circle,fill,inner sep=1pt] (A-13) at (11.6,.3) {};
    \node[draw,circle,fill,inner sep=1pt] (A-14) at (13.2,.3) {};
    \node[circle,fill,inner sep=1pt] (A-15) at (14.9,.3) {};
    \path[draw] (A-11) -- (A-12) node[pos=0.5,anchor=south] {0};
    \path[draw] (A-12) -- (A-13) node[pos=0.5,anchor=south] {0};
    \path[draw,dotted] (A-13) -- (A-14);
    \path[draw] (A-14) -- (A-15) node[pos=0.5,anchor=south] {$0$};
     \node[draw,circle,fill,inner sep=1pt] (A-16) at (8.2,-1.9) {};
    \node[draw,circle,fill,inner sep=1pt] (A-17) at (9.9,-1.9) {};
    \node[draw,circle,fill,inner sep=1pt] (A-18) at (11.6,-1.9) {};
    \node[draw,circle,fill,inner sep=1pt] (A-19) at (13.2,-1.9) {};
    \node[draw,circle,fill,inner sep=1pt] (A-20) at (14.9,-1.9) {};
    \path[draw] (A-16) -- (A-17) node[pos=0.5,anchor=north] {0};
    \path[draw] (A-17) -- (A-18) node[pos=0.5,anchor=north] {0};
    \path[draw,dotted] (A-18) -- (A-19);
    \path[draw] (A-19) -- (A-20) node[pos=0.5,anchor=north] {0};
    
    \path[draw] (A-5) -- (A-6) node[pos=0.5,anchor=east] {1};
    \path[draw] (A-5) -- (A-11) node[pos=0.65,anchor=south] {2};
    \path[draw] (A-5) -- (A-16) node[pos=0.5,anchor=north] {$s$};
    
    \node[draw,circle,fill,inner sep=1pt] (A-21) at (8.2,-.8) {};
	\path[draw,dotted] (A-5) -- (A-21);
	\node[draw,circle,fill,inner sep=1pt] (A-22) at (9.9,-.8) {};
    \node[draw,circle,fill,inner sep=1pt] (A-23) at (11.6,-.8) {};
    \node[draw,circle,fill,inner sep=1pt] (A-24) at (13.2,-.8) {};
    \node[draw,circle,fill,inner sep=1pt] (A-25) at (14.9,-.8) {};
    \path[draw] (A-21) -- (A-22) node[pos=0.5,anchor=north] {0};
    \path[draw] (A-22) -- (A-23) node[pos=0.5,anchor=north] {0};
    \path[draw,dotted] (A-23) -- (A-24);
    \path[draw] (A-24) -- (A-25) node[pos=0.5,anchor=north] {0};
	
	\draw[decorate,decoration={brace,amplitude=3pt,mirror}] 
    (-0.1,-3.5) coordinate (A1) -- (6.8,-3.5) coordinate (A-5); 
    \node at (3.3,-4){$N+1$};
    
    \draw[decorate,decoration={brace,amplitude=3pt,mirror}] 
    (8.1,-3.5) coordinate (A1) -- (15.0,-3.5) coordinate (A-5); 
    \node at (11.5,-4){$N+1$};
  \end{tikzpicture}

\caption{\label{fig:ut2} An unrooted tree realizing the upper bound on the number of minimal absent words.}
\end{figure}
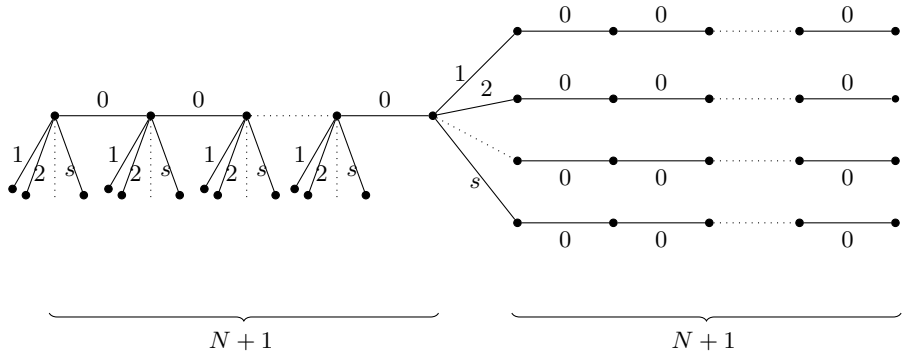

In total, $\UT$ has $(s+1)N+s(N+1)+1$ nodes. We therefore set $n=(s+1)N+s(N+1)+1$, so that $n =\Theta(sN)$. 

It is readily verified that for every $a,b,c$ in $\Sigma\setminus\{0\}$ and for every $0< j,k \leq  N$, there is a minimal absent word of the form $a0^{j}b0^{k}c$ (the prefix $a0^{j}b0^{k}$ can be found reading from the left part to the right part of the figure, while the suffix $0^{j}b0^{k}c$ can be found reading from the right part to the left part, the letter $b$ being one of the labels of the edges joining the left and the right part). 
Hence, the number of minimal absent words of $\UT$ is $\Omega(s^{3}N^{2})=\Omega(n^{2}\sigma)$. 

\begin{remark}
The previous construction can be modified in such a way that edges adjacent to a node have distinct labels, keeping the same bound on the number of minimal absent words. 
\end{remark}

\section{Algorithms for computing minimal absent words}

We now present an algorithm that computes the set $\MAW(\RT)$ of all minimal absent words of a rooted tree $\RT$ with $n$ nodes in output-sensitive time $O(n+|\MAW(\RT)|)$. 

We construct the suffix tree $\ST$ of $\RT$ in time $O(n)$~\cite{Shibuya03}. Recall that the leaves of $\ST$ are
in one-to-one correspondence with the nodes of $\RT$ and we can assume that every node $u$ of $\RT$ stores a pointer
to the leaf of $\ST$ corresponding to $\nodestring(u)$.

\begin{definition}
For every (implicit or explicit) node $u$ of $\ST$, we define the set $A(u)$ as the set of all letters $a\in \Sigma$ such that $au$ can be spelled from the root of $\ST$, i.e., there exists a node $v$ of $\RT$ such that $\nodestring(v)=auz$ for some (possibly empty) word $z$. 
\end{definition}

As already noted before, if $aub$ is a minimal absent word of $\RT$, then $au$  occurs as a factor in some word of $L_{\RT}$ followed by a letter $b'\in\Sigma\cup\{\$\}$ different from $b$, hence $u$ is an explicit node of $\ST$. 

\begin{lemma}\label{lem:a}
Let $u$ be an explicit node of the suffix tree $\ST$ of the tree $\RT$. Let $u_{1},u_{2},\ldots,u_{k}$ be the children of $u$ in the non-compacted trie from which we obtained $\ST$, and let $b_{1},b_{2},\ldots, b_{k}$ be the labels of the corresponding edges. Then, for every $1\leq i\leq k$ and every letter $$a_{j}\in (A(u_{1})\cup \ldots \cup A(u_{k}))\setminus A(u_{i}),$$ the word $a_{j}ub_{i}$ is a minimal absent word of $\RT$.

Conversely, every minimal absent word of $\RT$ is of the form $a_{j}ub_{i}$ described above.
\end{lemma}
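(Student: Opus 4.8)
The plan is to prove the two directions separately, relying throughout on one basic fact: a word $w$ can be spelled from the root of $\ST$ (ending at an explicit or implicit node) if and only if $w$ is a factor of some $\nodestring(v)\in L_{\RT}$. By the definition of $A$ this says exactly that $a\in A(w)$ iff $aw$ is a factor of $L_{\RT}$. I will also use that the trie child $u_i$ of $u$ corresponds precisely to the word $ub_i$, so $ub_i$ is automatically a factor.

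For the forward implication I would fix a real child, i.e.\ $b_i\in\Sigma$, together with a letter $a_j\in (A(u_1)\cup\cdots\cup A(u_k))\setminus A(u_i)$, and verify the three conditions defining a minimal absent word for $a_jub_i$. The word $ub_i=u_i$ is a factor since $u_i$ is a node of the trie. The word $a_ju$ is a factor because $a_j\in A(u_m)$ for some $m$, so $a_jub_m$ is a factor and hence its prefix $a_ju$ is a factor. Finally $a_jub_i=a_ju_i$ is absent precisely because $a_j\notin A(u_i)$. Since $A$ only ever contains letters of $\Sigma$, we have $a_j\in\Sigma$ automatically, so $a_jub_i$ is a legitimate minimal absent word.

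For the converse, let $aub$ be an arbitrary minimal absent word. As observed just before the lemma, $u$ is an explicit node of $\ST$; and since $ub$ is a factor, $b$ is the label $b_i$ of one of the trie children $u_i=ub$ of $u$, with $b_i\in\Sigma$. Because $aub=au_i$ is absent we get $a\notin A(u_i)$. It then remains to show $a\in A(u_1)\cup\cdots\cup A(u_k)$: since $au$ is a factor, $au$ is a prefix of some $\nodestring(v_1)\$$, hence followed there by some letter $c$, so $auc$ is a factor; then $c$ labels a trie child $u_m=uc$ of $u$ and $a\in A(u_m)$. Thus $a\in (A(u_1)\cup\cdots\cup A(u_k))\setminus A(u_i)$, exhibiting $aub$ in the required form.

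I expect the only delicate point to be the treatment of the sentinel $\$$. In the converse, the letter $c$ following an occurrence of $au$ may be $\$$ (when $au$ equals an entire word $\nodestring(v_1)$ and is never extended by a symbol of $\Sigma$); the argument still goes through provided the children $u_1,\dots,u_k$ of $u$ are taken to include the $\$$-child, so that $a$ is captured in $A(u_1)\cup\cdots\cup A(u_k)$ — in fact this union is exactly $A(u)$. Dually, in the forward direction one must generate words only for real children $b_i\in\Sigma$, since a word ending in $\$$ is never a minimal absent word. Maintaining this asymmetry, where the $\$$-child contributes to the union but is never used as the trailing letter $b_i$, is the single place where care is needed.
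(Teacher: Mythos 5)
Your proof is correct and takes essentially the same approach as the paper's: a direct verification of the three defining conditions of a minimal absent word for the forward direction, and for the converse, the explicitness of $u$ together with locating $b$ as a child label and $a$ in $(A(u_{1})\cup\cdots\cup A(u_{k}))\setminus A(u_{i})$. Your explicit handling of the sentinel $\$$ (the $\$$-child contributes to the union but is never the trailing letter $b_{i}$) is in fact more careful than the paper's proof, which leaves this point implicit.
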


\begin{proof}
Since $a_{j}$ does not belong to $A(u_{i})$, then by definition the word $a_{j}ub_{i}$ does not belong to $L_{\RT}$, but there exists $\ell\neq i$ such that $a_{j}\in A(u_{\ell})$, that is, $a_{j}ub_{\ell}$ is a factor of a word in $L_{\RT}$. Hence, $a_{j}u$ is a factor of a word in $L_{\RT}$. Since $ub_{i}$ is also a factor of a word in $L_{\RT}$ by construction, we have that $a_{j}ub_{i}$ is a minimal absent word of $\RT$.

Conversely, if $a_{j}ub_{i}$ is a minimal absent word of $\RT$, then $u$ occurs as a factor in some word of $L_{\RT}$ followed by different letters in $\Sigma\cup\{\$\}$, hence it corresponds to an explicit node in $\ST$, so all minimal absent words of $\RT$ are found in this way.
\qed
\end{proof}

\begin{definition}
For every leaf $u$ of $\ST$ we define the set $B(u)$ as the set of all letters $a\in \Sigma$ such that $au=\nodestring(v)$ for some node $v$ in $\RT$. 
\end{definition}

\begin{lemma}
\label{lem:b}
For every (implicit or explicit) node $u$ of $\ST$, we have $A(u)=\bigcup\{B(u')\mid u' \mbox{ is a leaf in the subtree of $\ST$ rooted at $u$}\}$.
\end{lemma}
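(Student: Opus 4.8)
The plan is to prove the set equality by establishing the pointwise equivalence: for every letter $a\in\Sigma$, $a\in A(u)$ if and only if $a\in B(u')$ for some leaf $u'$ in the subtree of $\ST$ rooted at $u$. Taking the union over all such $a$ then yields the claim, and it handles implicit and explicit $u$ uniformly, since the only thing we use about $u$ is the word it spells from the root.

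The key structural observation I would isolate first is the following: if $v$ is any non-root node of $\RT$ and $a$ is the label of the edge from $v$ to its parent $p$, then $\nodestring(v)=a\,\nodestring(p)$; equivalently, deleting the first letter of $\nodestring(v)$ yields exactly $\nodestring(p)$. This is immediate from the definition of $\nodestring$ as the sequence of edge labels read from $v$ up to the root. I would also recall that, because $\RT$ is deterministic, a leaf of $\ST$ can be identified with the word $\nodestring(p)$ of the node $p\in\RT$ it represents, and that this leaf lies in the subtree rooted at $u$ precisely when the word $u$ is a prefix of $\nodestring(p)$.

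For the forward direction, suppose $a\in A(u)$, so there is a node $v$ of $\RT$ with $\nodestring(v)=auz$ for some (possibly empty) word $z$. Since this word starts with the letter $a\in\Sigma$, $v$ is not the root and has a parent $p$; by the observation, $\nodestring(p)=uz$, which has $u$ as a prefix, so the leaf $u'$ representing $p$ lies below $u$. Moreover $\nodestring(v)=a\,\nodestring(p)=au'$ shows that $a\in B(u')$. For the converse, suppose $a\in B(u')$ for a leaf $u'$ below $u$, representing a node $p$. Then $u$ is a prefix of $\nodestring(p)$, say $\nodestring(p)=uz$, and $a\in B(u')$ means $a\,\nodestring(p)=\nodestring(v)$ for some node $v$; hence $\nodestring(v)=auz$, so $au$ is a prefix of $\nodestring(v)$ and therefore $a\in A(u)$.

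I expect no serious obstacle here: the argument is essentially a reindexing through the parent map of $\RT$, and the crux is simply recognizing that prepending a letter $a$ to the word $u$ (as in the definition of $A(u)$) corresponds, via this map, to a leaf lying below $u$ (as in $B$). The only points demanding care are the bookkeeping around the sentinel $\$$, which never occurs inside any $\nodestring(v)$ and so does not interfere with comparing prefixes in $\Sigma^{*}$, and the identification of a leaf with the bare word $\nodestring(p)$ rather than $\nodestring(p)\$$; the degenerate case in which $u$ is itself a leaf (its subtree being the single node $u$, so that $A(u)=B(u)$) drops out of the same equivalence.
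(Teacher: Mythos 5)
Your proof is correct and takes essentially the same route as the paper's: both unwind the definitions, identifying the leaves of $\ST$ below $u$ with the words $uz\in L_{\RT}$ having $u$ as a prefix, so that membership in $\bigcup B(u')$ and in $A(u)$ both reduce to ``$auz\in L_{\RT}$ for some $z$.'' If anything, you are slightly more explicit than the paper: your parent-map observation $\nodestring(v)=a\,\nodestring(p)$ is exactly the (implicit) step the paper needs for the inclusion $A(u)\subseteq\bigcup B(u')$, since it guarantees that when $auz=\nodestring(v)$, the word $uz=\nodestring(p)$ itself corresponds to a leaf in the subtree rooted at $u$.
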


\begin{proof}
Let $u'$ be a leaf in the subtree of $\ST$ rooted at $u$. Thus, the word $u$ is a prefix of the word $u'$,
i.e., $u'=uz$ for some word $z$. By definition, $B(u')$  is the set of all letters $a\in \Sigma$  such that $au'=\nodestring(v')$ for
some node $v'$ in $\RT$. That is, the set of all letters $a\in \Sigma$  such that $au'=auz$ is a word in $L_{\RT}$. On the other
hand, by definition, $A(u)$ is the set of all letters $a\in \Sigma$ such that $\nodestring(v)=auz$ for some node $v$ of $\RT$ and
word $z$. That is, the set of all letters $a\in \Sigma$  such that $auz$ is a word in $L_{\RT}$ for some word $z$.
\qed
\end{proof}

We now show how to compute, in time proportional to the output size, the set $\MAW(\RT)$.

We start with creating, for every letter $a \in \Sigma$, a list $L(a)$ of all leaves $u$ such that $a\in B(u)$ sorted in preorder.
The lists can be obtained in linear time by traversing all the non-root nodes $v\in \RT$, following the edge
labeled by $a$ from $v$ to its parent $v'$, and finally following the pointer from $v'$ to the leaf $v''$
of $\ST$ corresponding to $\nodestring(v')$ and adding $v''$ to $L(a)$. Finally, because the preorder numbers are from
$[n]$ the lists can be sorted in linear time with counting sort. 

Now we iterate over all letters $a\in \Sigma$. Due to \cref{lem:a}, the goal is to extract all explicit nodes $u\in \ST$
such that, for some child $u_{i}$ of $u$ such that the $b_{i}$ is the first letter on the edge from $u$ to $u_{i}$,
$a u b_{i}$ is a minimal absent word. By~\cref{lem:b}, this is equivalent to $u$ having a descendant $u'\in L(a)$
(where possibly $u=u'$) and $u_{i}$ not having any such descendant. This suggests that we should work
with the subtree of $\ST$, denoted $\ST(a)$, induced by all leaves $v\in L(a)$. Formally, $u\in \ST(a)$ when
$u'\in L(a)$ for some leaf $u'$ in the subtree of $u$. Even though $\ST$ does not contain nodes with just one child,
this is no longer the case for $\ST(a)$. Thus, we actually work with its compact version, denoted $\ST(a)$. 
Every node of $\ST(a)$ stores a pointer to its corresponding node of $\ST$.
Assuming that $\ST$ has been preprocessed for constant-time Lowest Common Ancestor queries (which can be done in linear time and space~\cite{DBLP:journals/siamcomp/SchieberV88,BF00}), we can construct $\ST(a)$ efficiently due to the following lemma.

\begin{lemma}
\label{lem:induced}
Given $L(a)$, we can construct $\ST(a)$ in $O(|L(a)|)$ time.
\end{lemma}

\begin{proof}
The procedure follows the general idea used in the well-known linear time procedure for creating a Cartesian tree~\cite{Gabow84}.
We process the nodes $u\in L(a)$ in preorder and maintain a compact version of the subtree of $\ST$ induced by
all the already-processed nodes. Additionally, we maintain a stack storing the edges on its rightmost path.
Processing $u\in L(a)$ requires popping a number of edges from the stack, possibly splitting the topmost edge
into two (with one immediately popped as well), and pushing a new edge ending at $u$. Checking if an edge
should be popped, and also determining if (and how) an edge should be split, can be implemented with LCA
queries on $\ST$, assuming that we maintain pointers to the corresponding nodes of $\ST$.
\qed
\end{proof}

Having constructed $\ST(a)$, we need to consider two cases corresponding to $u$ being an explicit or an implicit node
of $\ST(a)$. In the former case, we need to extract the edges outgoing from $u$ in $\ST$ such that there is no edge outgoing
from the corresponding node in $\ST(a)$ starting with the same letter $b$, and output $aub$ as a minimal absent word.
Assuming that the outgoing edges are sorted
by their first letters, this can be easily done in time proportional to the degree of $u$ plus the number of extracted letters.
In the latter case, let the implicit node belong to an edge connecting $u$ to $v$ in $\ST(a)$, and let $u'$ and $v'$
be their corresponding nodes in $\ST$ with $u'$ being an ancestor of $v'$. We iterate through all explicit nodes
between $u'$ and $v'$ in $\ST$ and  for each such node we extract all of its outgoing edges. 
For each such edge
we check if $v'$ belongs to the subtree rooted at its endpoint other than $u$, and if not, extract its first letter $b$
to output $aub$ as a minimal absent word.

The overall time for every letter $a\in\Sigma$ can be bounded by the sum of the size of $\ST(a)$ and the number of generated
minimal absent words. Because $\sum_{a\in\Sigma}|L(a)| = O(n)$ and the size of $\ST(a)$ can be bounded by $O(|L(a)|)$,
the total time complexity is $O(n+|\MAW(\RT)|)$.

\medskip 

The previous algorithm can be used to design an algorithm that outputs all the minimal absent words of an unrooted tree $\UT$ with $n$ nodes in time $O(n^{2}+|\MAW(\UT)|)$ as follows.
For every node $u$ of $\UT$, we create a rooted tree $\RT_{u}$ by fixing $u$ as the root.
Then we merge all trees $\RT_{u}$ into a single tree $\RT$ of size $O(n^{2})$ by identifying their roots. 
Finally, we apply Lemma~\ref{lem:proper} to make $\RT$ deterministic and apply our algorithm for rooted trees
in $O(n^{2})$ total time.

\section{Acknowledgments}

 This research was carried out during a visit of the first author to the Institute of Computer Science of the University of Wroc{\l}aw, supported by grant CORI-2018-D-D11-010133 of the University of Palermo. The first author is also supported by MIUR project PRIN 2017K7XPAN ``Algorithms, Data Structures and Combinatorics for Machine Learning''.
 
 We thank anonymous reviewers for helpful comments.

\end{document}